\documentclass[11pt]{article}
\usepackage[margin=2.5cm]{geometry}
\usepackage{amsmath,amssymb,amsthm}
\usepackage{booktabs}
\usepackage{hyperref}

\newcommand{\E}{\mathbb{E}}
\newcommand{\R}{\mathbb{R}}
\newcommand{\ind}{\mathbf{1}}
\newcommand{\dd}{\mathrm{d}}
\newcommand{\pp}{\partial}

\newtheorem{theorem}{Theorem}
\newtheorem{corollary}{Corollary}

\title{Unbiased Monte Carlo Greeks for Discontinuous Payoffs}
\author{Evgeny Lakshtanov\\[4pt]
{\normalsize CIDMA, University of Aveiro, Portugal}\\[2pt]
{\small \texttt{lakshtanov@ua.pt}}}
\date{\today}

\begin{document}
\maketitle

\begin{abstract}
Pathwise differentiation of Monte Carlo estimators fails at payoff discontinuities, producing zero or biased sensitivities for barriers, autocallables, and digital options. The industry workaround --- smoothing the indicator functions --- introduces bias and requires per-product calibration. We derive a correction formula that restores unbiased Greeks without smoothing. For a payoff $F(Z,\theta)$ that is piecewise smooth with discontinuities on surfaces $\{g_i = 0\}$, we show that the sensitivity decomposes into a pathwise term (computed by standard AAD) plus a sum of boundary corrections, each involving the payoff jump, the Gaussian density at the boundary, and the sensitivity of the boundary to the parameter. The correction is computed by Newton root-finding in the normal-random space, with the jump evaluated by two forward replays of the pricing kernel. The implementation uses AADC (\texttt{pip install aadc}), whose tape replay and automatic discontinuity tracking make the method fully automatic --- the quant writes standard pricing code, and the correction driver identifies and handles all discontinuities. We prove the formula for arbitrary compositions of smooth functions and indicator functions (not just outer products), covering real autocallable payoff structures with recursive alive/dead logic. Benchmarks on QuantLib models (GBM, Heston, Hull-White) show all Greeks within 0.1--4\% of analytic or bump-and-revalue references.
\end{abstract}

\section{Introduction}

Consider a derivative whose price is given by
\begin{equation}\label{eq:price}
V(\theta) = \E\bigl[F(Z, \theta)\bigr], \qquad Z \sim \mathcal{N}(0, I_d),
\end{equation}
where $Z = (Z_1, \ldots, Z_d)$ are independent standard normal random variables driving the simulation (e.g., $d = \text{assets} \times \text{time steps}$), $\theta$ is a model parameter (spot, volatility, rate, etc.), and $F$ is the discounted payoff.

The \emph{pathwise method} computes the sensitivity $\pp V / \pp \theta$ by differentiating under the expectation:
\[
\frac{\pp V}{\pp \theta} = \E\!\left[\frac{\pp F}{\pp \theta}\right].
\]
This works when $F$ is smooth in $\theta$, and is efficiently computed by adjoint algorithmic differentiation (AAD) at the cost of a single reverse pass.

However, many structured products contain indicator functions:
\[
F(Z, \theta) = P(Z, \theta) \cdot \prod_{i=1}^K \ind\{g_i(Z, \theta) > 0\},
\]
or more generally, $F$ is a piecewise smooth function with discontinuities on the surfaces $\Gamma_i = \{g_i = 0\}$. Naively differentiating produces $\delta$-functions:
\[
\frac{\pp}{\pp\theta}\bigl(P \cdot \ind\{g > 0\}\bigr) = \frac{\pp P}{\pp\theta} \cdot \ind\{g > 0\} + P \cdot \delta(g) \cdot \frac{\pp g}{\pp\theta},
\]
and the $\delta(g)$ term is zero at almost every sample point. AAD returns zero sensitivity for paths near the boundary.

The standard workaround is \emph{smoothing}: replacing $\ind\{g > 0\}$ with $\Phi(g/\varepsilon)$. This introduces bias of order $\varepsilon$, requires calibration of $\varepsilon$ per product and per Greek, and creates model risk.

Bump-and-revalue also suffers from discontinuities: the variance of the finite-difference estimator $\Delta \approx (V(\theta+h) - V(\theta-h))/(2h)$ grows as $O(1/h)$ for discontinuous payoffs, compared to $O(1)$ for smooth payoffs.

In this paper we derive a correction formula that replaces the $\delta$-function with a computable boundary term. The correction is unbiased, requires no parameter tuning, and works for any model (GBM, Heston, Hull-White, hybrid).

\paragraph{Origin of the idea.} In 2021, while working at Matlogica, Dmitry Goloubentsev posed the question: can we reduce the $d$-dimensional expectation to a $(d-1)$-dimensional expectation of a one-dimensional integral that is smoother than the original? At the time we did not obtain interesting results. The breakthrough came later: the 1D integral does not need to be evaluated explicitly --- the Leibniz rule extracts the boundary contribution directly.

\section{Main results}

\subsection{Single indicator}

\begin{theorem}\label{thm:single}
Let $P, g: \R^d \times \R \to \R$ be smooth, with $\nabla_{\!Z} g \neq 0$ on $\Gamma = \{g = 0\}$. Let $v \in \R^d$ be a fixed unit vector with $v \cdot \nabla_{\!Z} g > 0$ on $\Gamma$. Write $Z = u\,v + w$ with $u = v \cdot Z \in \R$, $w \perp v$.

For each $w \in \R^{d-1}$, let $u^*(w, \theta)$ be the solution of $g(u^*v + w, \theta) = 0$, and set $Z^*(w) = u^*v + w$.

Then for $V(\theta) = \E_Z\bigl[P(Z,\theta) \cdot \ind\{g(Z,\theta) > 0\}\bigr]$, $Z \sim \mathcal{N}(0, I_d)$:
\begin{equation}\label{eq:correction}
\frac{\pp V}{\pp \theta} = \E_Z\!\left[\frac{\pp P}{\pp \theta} \cdot \ind\{g > 0\}\right]
+ \E_w\!\left[\Delta P \cdot \frac{\varphi(u^*)}{\pp_u g} \cdot \frac{\pp g}{\pp \theta}\bigg|_{Z^*}\right],
\end{equation}
where the first expectation is over $Z \sim \mathcal{N}(0, I_d)$, the second over $w \sim \mathcal{N}(0, I_{d-1})$, $\Delta P = \lim_{\varepsilon \to 0^+} [P(Z^* + \varepsilon v) - P(Z^* - \varepsilon v)]$, $\pp_u g = v \cdot \nabla_{\!Z} g$, and $\varphi(x) = (2\pi)^{-1/2} e^{-x^2/2}$.
\end{theorem}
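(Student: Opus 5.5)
Because the density of $Z$ splits under the orthogonal decomposition $Z = u\,v + w$, I will reduce the statement to a one-dimensional Leibniz computation along $v$ and then average over $w$. Since $v$ is a unit vector and $Z \sim \mathcal{N}(0, I_d)$, the coordinates $u = v\cdot Z$ and $w = Z - uv$ are independent, with $u \sim \mathcal{N}(0,1)$ and $w \sim \mathcal{N}(0, I_{d-1})$ on $v^\perp$. The density factorizes as $\varphi(u)\,\varphi_{d-1}(w)$, so
\[
V(\theta) = \E_w\!\left[ \int_{\R} P(uv+w,\theta)\,\ind\{g(uv+w,\theta)>0\}\,\varphi(u)\,\dd u \right].
\]
The hypothesis $\pp_u g = v\cdot\nabla_Z g > 0$ on $\Gamma$ says that along each line $\{uv + w\}$ the function $g$ crosses zero only upward, and transversally. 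I will take $u^*(w,\theta)$ to be well defined as in the statement, i.e.\ a unique crossing, so that $\{u : g > 0\} = (u^*(w,\theta), \infty)$. If $g$ never vanishes on the line, the set is all of $\R$ or empty, and that $w$ contributes no boundary term.

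For fixed $w$, the inner integral is $I(w,\theta) = \int_{u^*(w,\theta)}^{\infty} P(uv+w,\theta)\,\varphi(u)\,\dd u$. By the implicit function theorem, applicable since $\pp_u g \neq 0$ at the root, $u^*$ is $C^1$ in $\theta$ with $\pp_\theta u^* = -\pp_\theta g / \pp_u g$ evaluated at $Z^*$. The Leibniz rule then gives
\[
\pp_\theta I = \int_{u^*}^{\infty} \pp_\theta P\,\varphi\,\dd u \;-\; P(Z^*,\theta)\,\varphi(u^*)\,\pp_\theta u^* ,
\]
and the boundary term equals $P(Z^*)\,\varphi(u^*)\,\pp_\theta g/\pp_u g$. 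Since $F = P\cdot\ind\{g>0\}$ takes the value $P$ just above the crossing and $0$ just below, the jump of $F$ across the crossing is $P(Z^*)$. This is how I read the statement's $\Delta P$: the jump of the payoff $P\cdot\ind\{g>0\}$ itself, consistent with how the general formula will be used later. The smooth $P$ alone has no jump, so read literally as a limit of $P$ the quantity would vanish. I will state this identification explicitly. Taking $\E_w$ and reassembling the first term via the factorized density yields \eqref{eq:correction}.

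The main obstacle is justifying the interchange $\pp_\theta \E_w = \E_w \pp_\theta$, and the Leibniz differentiation with an infinite upper limit. I would impose standard growth conditions: $P$, $\pp_\theta P$, and the boundary factor $\varphi(u^*)\,\pp_\theta g/\pp_u g$ dominated, locally uniformly in $\theta$, by a function integrable against the Gaussian density, for instance polynomial growth together with $\pp_u g$ bounded away from zero on $\Gamma$. Dominated convergence then applies, first to the $u$-integral and then to the $w$-expectation, via the mean value theorem on difference quotients. A secondary point is the measure-zero set of $w$ on which the line is tangent to $\Gamma$ or $u^*$ fails to exist. The transversality hypothesis makes the root set open in $w$ and the root $C^1$ there, and where there is no root the $\theta$-derivative of the indicator vanishes locally, so these $w$ contribute only the pathwise term.
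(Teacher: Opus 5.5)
Your argument is the paper's own (orthogonal split $Z=uv+w$, Leibniz rule for $\int_{u^*}^{\infty}P\,\varphi\,\dd u$, implicit function theorem for $\pp_\theta u^*$), and reading $\Delta P$ as the jump $P(Z^*)$ of $P\cdot\ind\{g>0\}$ is the intended meaning; your boundary term $-P(Z^*)\,\varphi(u^*)\,\pp_\theta u^*$ also has the correct sign, whereas the paper's Step~2 writes $+\Delta P\,\varphi(u^*)\,\pp_\theta u^*$, which after the IFT substitution would give the opposite sign to the stated formula. One small correction to your closing remark: $v\cdot\nabla_{\!Z} g>0$ on $\Gamma$ already rules out tangency and forces at most one root per line, but the set of $w$ with no root need not be null (e.g.\ $g=\arctan Z_1+Z_2$, $v=e_1$), which is harmless here only because the statement presupposes that $u^*$ exists for every $w$.
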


\begin{proof}
\textbf{Step 1 (change of variables).} Let $v \in \R^d$ be any fixed unit vector such that $v \cdot \nabla_{\!Z} g \neq 0$ on $\Gamma$. Change integration variables from $Z$ to $(u, w)$ where $u = v \cdot Z \in \R$ and $w = Z - u\,v \in \R^{d-1}$. Since $v$ is fixed and unit, this is an orthogonal change of variables. Because $Z \sim \mathcal{N}(0, I_d)$, the components $u$ and $w$ are independent: $\varphi(Z)\,\dd Z = \varphi(u)\,\varphi_{d-1}(w)\,\dd u\,\dd w$. Hence
\[
V(\theta) = \int_{\R^{d-1}} \!\left[\int_{\R} P(u, w) \cdot \ind\{g(u,w)>0\} \cdot \varphi(u)\, \dd u\right] \varphi_{d-1}(w)\, \dd w.
\]
The inner integral is one-dimensional with a jump at $u = u^*(w,\theta)$, the value where $g(u^*, w, \theta) = 0$.

\textbf{Step 2 (Leibniz rule).} Differentiate the inner integral:
\[
\frac{\pp}{\pp \theta} \int P \cdot \ind\{g>0\} \cdot \varphi\, \dd u = \int \frac{\pp P}{\pp \theta} \cdot \ind\{g>0\} \cdot \varphi\, \dd u + \Delta P \cdot \varphi(u^*) \cdot \frac{\pp u^*}{\pp \theta}.
\]

\textbf{Step 3 (implicit function theorem).} At the boundary: $g(u^*, w, \theta) = 0$. By the IFT:
\[
\frac{\pp u^*}{\pp \theta} = -\frac{\pp g / \pp \theta}{\pp g / \pp u}\bigg|_{u=u^*},
\]
where $\pp_u g = v \cdot \nabla_{\!Z} g > 0$ by the choice of $v$. Substituting and integrating over $w$ gives \eqref{eq:correction}.

\textbf{Remark:} choosing $v = \nabla_{\!Z} g / \|\nabla_{\!Z} g\|$ gives $\pp_u g = \|\nabla_{\!Z} g\|$.
\end{proof}

\begin{corollary}[MC-friendly form]\label{cor:mc}
Since $u$ and $w$ are independent and the correction integrand does not depend on $u$:
\[
\E_w[f(w)] = \E_Z[f(w(Z))].
\]
Therefore both terms in \eqref{eq:correction} can be estimated from the same MC sample $Z^{(1)}, \ldots, Z^{(M)} \sim \mathcal{N}(0, I_d)$.
\end{corollary}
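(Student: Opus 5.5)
The identity $\E_w[f(w)] = \E_Z[f(w(Z))]$ is just the pushforward of $\mathcal{N}(0,I_d)$ under the map $Z \mapsto w(Z) = Z - (v\cdot Z)\,v$. To prove it, we identify that pushforward with the law of $w$ that appears in the second term of \eqref{eq:correction}.

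\textbf{Step 1 (the law of $w$).} Complete $v$ to an orthonormal basis $v, e_2, \ldots, e_d$ of $\R^d$, and let $Q$ be the orthogonal matrix whose rows are these vectors. Then $QZ = (u, \tilde w)$, where $\tilde w = (e_2\cdot Z, \ldots, e_d\cdot Z)$ and $w = \sum_{k\ge 2} (e_k\cdot Z)\, e_k$. This is exactly the coordinate identification of $v^\perp$ with $\R^{d-1}$ used in Step 1 of the proof of Theorem~\ref{thm:single}. Because $Q$ is orthogonal, $QZ \sim \mathcal{N}(0, I_d)$. It follows that $u$ and $\tilde w$ are independent, with $u \sim \mathcal{N}(0,1)$ and $\tilde w \sim \mathcal{N}(0, I_{d-1})$. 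This is the factorization $\varphi(Z)\,\dd Z = \varphi(u)\,\varphi_{d-1}(w)\,\dd u\,\dd w$ already used in the theorem.

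\textbf{Step 2 (Fubini).} The correction integrand
\[
f(w) = \Delta P \cdot \frac{\varphi(u^*)}{\pp_u g} \cdot \frac{\pp g}{\pp\theta}\bigg|_{Z^*(w)}
\]
depends on $Z$ only through $w$: $u^* = u^*(w,\theta)$ is determined by $w$ alone, so $Z^*(w)$ is as well. Apply Fubini with the product density to get
\[
\E_Z[f(w(Z))] = \int_{\R^{d-1}}\!\int_\R f(w)\,\varphi(u)\,\dd u\,\varphi_{d-1}(w)\,\dd w = \int_{\R^{d-1}} f(w)\,\varphi_{d-1}(w)\,\dd w = \E_w[f(w)].
\]
The inner integral equals $1$.

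\textbf{Step 3 (the estimator).} Given samples $Z^{(m)}$, form $w^{(m)} = Z^{(m)} - (v\cdot Z^{(m)})\,v$. These are i.i.d.\ draws from the law of $w$, so $M^{-1}\sum_m f(w^{(m)})$ is an unbiased estimator of the correction term. The pathwise term is estimated from the same $Z^{(m)}$ in the usual way. The two estimators are correlated, but each is unbiased, and by linearity of expectation their sum is unbiased for $\pp V/\pp\theta$.

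\textbf{Main obstacle.} The only delicate point is integrability. Fubini needs $f \in L^1(\varphi_{d-1})$, and this is the same condition that is implicitly required for the second expectation in \eqref{eq:correction} to be finite. I would therefore assume it along with the hypotheses of Theorem~\ref{thm:single}. If $v$ is taken to depend on $Z$, for example $v = \nabla_Z g/\|\nabla_Z g\|$ evaluated at the sample, the argument breaks, because $u$ and $w$ are no longer independent. So $v$ must be fixed, as the theorem stipulates.
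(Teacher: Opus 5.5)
Your proposal is correct and follows the same reasoning the paper gives inline. The fixed orthogonal split $Z = u\,v + w$ makes $u$ and $w$ independent with $w \sim \mathcal{N}(0, I_{d-1})$, and since the correction integrand depends on $Z$ only through $w$, integrating out $u$ (your Fubini step) gives $\E_Z[f(w(Z))] = \E_w[f(w)]$; you simply make explicit what the paper leaves implicit: the basis completion, the $L^1$ assumption, unbiasedness of the summed estimator by linearity, and the requirement that $v$ be fixed rather than path-dependent.
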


\subsection{General payoff}

Real autocallable payoffs are not in the factored form $P \cdot \prod \ind_i$. Indicators are embedded in recursive alive/dead logic. We need a theorem for arbitrary compositions.

\textbf{Transversality.} Surfaces $\Gamma_i$, $\Gamma_j$ are transversal if their normals $v_i$, $v_j$ are not parallel at intersection points. This holds automatically for standard products: indicators at different dates depend on different $Z$-coordinates, so their gradients point in different directions.

\begin{theorem}\label{thm:general}
Let $g_1, \ldots, g_K : \R^d \times \R \to \R$ be smooth with $\nabla_{\!Z} g_i \neq 0$ on $\Gamma_i = \{g_i = 0\}$, and assume the surfaces $\Gamma_i$ are pairwise transversal. Let $F(Z, \theta)$ be piecewise smooth with discontinuities only on $\Gamma_1, \ldots, \Gamma_K$.

For each $i$, fix a unit $v_i$ with $v_i \cdot \nabla_{\!Z} g_i > 0$ on $\Gamma_i$. Write $Z = u_i\,v_i + w_i$, and let $u^*_i(w_i, \theta)$ solve $g_i(u^*_i v_i + w_i, \theta) = 0$. Then:
\begin{equation}\label{eq:general}
\frac{\pp}{\pp \theta}\E_Z[F(Z, \theta)] = \E_Z\!\left[\frac{\pp F}{\pp \theta}\right]
+ \sum_{i=1}^K \E_{w_i}\!\left[\Delta_i F \cdot \frac{\varphi(u^*_i)}{\pp_{u_i} g_i} \cdot \frac{\pp g_i}{\pp \theta}\bigg|_{Z^*_i}\right],
\end{equation}
where $\Delta_i F = \lim_{\varepsilon \to 0^+}[F(Z^*_i + \varepsilon\, v_i) - F(Z^*_i - \varepsilon\, v_i)]$. As in Corollary~\ref{cor:mc}, each $\E_{w_i}$ can be replaced by $\E_Z$ for MC implementation.
\end{theorem}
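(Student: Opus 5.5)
The plan is to reduce Theorem~\ref{thm:general} to Theorem~\ref{thm:single} by localizing near each discontinuity surface with a partition of unity. I would then use transversality to control the points where two surfaces meet.

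\textbf{Step 1 (regular/singular split).} Let $S=\bigcup_{i\neq j}(\Gamma_i\cap\Gamma_j)$. By pairwise transversality, $S$ is a union of submanifolds of codimension $2$. Away from $S$, each point of $\bigcup_i\Gamma_i$ has a neighbourhood $U$ that meets exactly one surface $\Gamma_i$. In such a $U$, $F$ agrees with $A+(B-A)\ind\{g_i>0\}$ for smooth functions $A,B$ extending the two one-sided branches of $F$. The branch-extension step is where the ``arbitrary composition'' hypothesis gets used. Recursive alive/dead logic only ever produces finitely many smooth pieces, each cut out by sign conditions on the $g_i$, so near a regular point exactly one sign changes and both sides extend smoothly. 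Here $B-A$ restricted to $\Gamma_i$ is exactly $\Delta_iF$.

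\textbf{Step 2 (localized single-indicator formula).} Choose a smooth partition of unity $\{\chi_\alpha\}$ subordinate to this cover, together with a function $\chi_S$ supported in an $\eta$-tube around $S$. For the pieces $\chi_\alpha F$ I apply Theorem~\ref{thm:single} with $P=\chi_\alpha(B-A)$ and the smooth remainder $\chi_\alpha A$, using the fixed direction $v_i$. This gives the pathwise term plus $\E_{w_i}\!\left[\chi_\alpha\Delta_iF\,\varphi(u_i^*)\,\pp_\theta g_i/\pp_{u_i}g_i\right]$. The $\chi_\alpha$ are $\theta$-independent, and the Leibniz step is unchanged by the cutoffs. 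Summing over $\alpha$ reassembles the full boundary term for each $i$, multiplied by $(1-\chi_S)$. If $u_i^*$ is multivalued along a line $w_i+\R v_i$, Step 2 of the proof of Theorem~\ref{thm:single} applies root by root. The sign condition $v_i\cdot\nabla g_i>0$ actually forces the root to be unique.

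\textbf{Step 3 (the singular tube; main obstacle).} I must show that the contribution of $\chi_S F$ to $\pp_\theta V$ and the boundary integrals over $\Gamma_i\cap\{\chi_S\neq0\}$ both tend to $0$ as $\eta\to0$. The boundary terms are integrals over an $(d-1)$-dimensional surface of a region of surface measure $O(\eta)$, by transversality. With $F$ and $\varphi/\pp_{u}g$ locally bounded, they vanish in the limit. For the bulk term I would not differentiate $\chi_S F$ directly. Instead I use difference quotients: $|\E[\chi_S(F(\theta+h)-F(\theta))]|/h$ is bounded by the pathwise part plus the Gaussian measure of the region swept by the moving surfaces inside the tube. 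That region is $O(h\eta)$, so the contribution is $O(\eta)$ uniformly in $h$. Taking $h\to0$ and then $\eta\to0$ gives \eqref{eq:general}.

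The delicate points are uniformity as the tube shrinks and integrability at infinity. For the latter I would assume $F$, $\pp_\theta F$ and $\pp_\theta g_i/\pp_{u_i}g_i$ have polynomial growth, and let the Gaussian weight handle the tails. The final replacement of each $\E_{w_i}$ by $\E_Z$ follows directly from Corollary~\ref{cor:mc}.
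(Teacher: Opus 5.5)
Your argument is correct in outline, but it is organized differently from the paper's proof. The paper works globally, one surface at a time. For each $i$ it rotates to $(u_i,w_i)$ and asserts, via transversality, that the line $w_i+\R v_i$ has a single jump at $u_i^*$. It applies the Leibniz rule and the implicit function theorem to that jump, then integrates over $w_i$ and sums over $i$.

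You instead localize with a $\theta$-independent partition of unity, so that each patch meets only one $\Gamma_i$. You apply Theorem~\ref{thm:single} patchwise with the same $v_i$. You then remove the codimension-two set $S$ with a shrinking tube, controlled by difference quotients and the $O(h\eta)$ swept-volume bound.

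Your route supplies what the paper's sketch leaves implicit. Globally, the line along $v_i$ generally also crosses the other $\Gamma_j$, so the Leibniz identity in the $(u_i,w_i)$ frame would pick up those jumps as well. Summing $K$ such per-$i$ identities would then count the pathwise term $K$ times. Your cutoffs make the ``one jump per line'' claim true locally and make the sum over $i$ legitimate. Your tube estimate shows exactly where transversality is needed: the intersections, where $\Delta_i F$ is not even defined, contribute nothing in the limit. You also state explicitly the boundedness and growth hypotheses that the dominated-convergence steps require, which the paper never does.

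The paper's version is shorter and mirrors the algorithm, with one Newton solve per indicator along a fixed $v_i$. But it is a heuristic sketch, and yours is essentially the argument needed to make it rigorous.

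One detail you should state explicitly. Build a finite cover on a large ball, and take $|\theta-\theta_0|$ small enough (depending on $\eta$ and the radius) that the moving surfaces stay separated outside the tube. Then let the radius grow, using the Gaussian tails. This is what justifies keeping the $\chi_\alpha$ independent of $\theta$.
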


\begin{proof}
\textbf{Step 1.} For each $\Gamma_i$, fix $v_i$ and change variables: $u_i = v_i \cdot Z$, $w_i = Z - u_i\,v_i$. Orthogonal rotation, so $\varphi(Z) = \varphi(u_i)\,\varphi(w_i)$.

\textbf{Step 2.} The inner integral over $u_i$ has a jump at $u_i = u^*_i$ where $g_i = 0$. By transversality, other surfaces $\Gamma_j$ cross the line along $v_i$ at different points --- so there is exactly one jump in $u_i$ at $u^*_i$, and the Leibniz rule applies.

\textbf{Step 3.} Leibniz rule:
$\frac{\pp}{\pp\theta}\int F\,\varphi(u_i)\,\dd u_i = \int \frac{\pp F}{\pp\theta}\,\varphi\,\dd u_i + \Delta_i F \cdot \varphi(u^*_i) \cdot \frac{\pp u^*_i}{\pp\theta}$.

\textbf{Step 4.} IFT: $\frac{\pp u^*_i}{\pp\theta} = -\frac{\pp g_i/\pp\theta}{\pp g_i/\pp u_i}\big|_{Z^*_i}$.

Integrating over $w_i$ and summing over $i$ gives the result.
\end{proof}

\section{Screening}

Skip paths where $|u - u^*| > \sigma_{\mathrm{skip}}$ (equivalently, $|g|/\pp_u g > \sigma_{\mathrm{skip}}$). If a path has $|u - u^*| > \sigma$, there are only two possibilities:
\begin{itemize}
\item $|u^*|$ is small (boundary in high-density region, $\varphi(u^*)$ large): then $|u| > \sigma$, meaning the path itself is in the extreme Gaussian tail. In any finite MC sample such paths do not occur.
\item $|u|$ is typical ($|u| \sim 1$): then $|u^*| > \sigma - 1$, so the boundary is deep in the Gaussian tail. $\varphi(u^*)$ is exponentially small and the correction integrand is negligible.
\end{itemize}
With $\sigma_{\mathrm{skip}} = 20$: $\sim$70--80\% of indicators are skipped per path.

\section{Algorithm}

\textbf{Precomputation (once):} set $Z = 0$, forward + reverse on each $g_i$. Direction: $v_i = \nabla_{\!Z} g_i(0) / \|\nabla_{\!Z} g_i(0)\|$.

\textbf{Per path} $Z$:
\begin{enumerate}
\item \textbf{Pathwise:} forward + reverse $\Rightarrow$ $\pp F / \pp \theta$.
\item \textbf{For each indicator $i$:}
\begin{enumerate}
\item Screen: if $|g_i| / \|\nabla_{\!Z} g_i\| > \sigma_{\mathrm{skip}}$, skip.
\item Newton: find $u^*_i$ s.t. $g_i(Z^*) = 0$ along $v_i$.
\item Reverse at $Z^*$: get $\pp g_i / \pp \theta$ and $\pp_{u_i} g_i$.
\item Jump: $\Delta_i F = F(Z^* + \varepsilon v) - F(Z^* - \varepsilon v)$.
\item Accumulate correction.
\end{enumerate}
\item Greek $=$ pathwise $+$ correction.
\end{enumerate}

\paragraph{Simple models vs stochastic volatility.}
For GBM/LMM, $g(Z)$ is an explicit function of $Z$, so $Z^*$, $\nabla_{\!Z} g$, $\pp g/\pp\theta$ are all analytic. Only the jump requires model replay (2 forwards per indicator).

For Heston/Hull-White/hybrids, $g(Z)$ is nonlinear and Newton + reverse-mode AD are needed.

\paragraph{Cost.} Stochastic vol models: 6--10 kernel replays per active indicator (Newton 3F+3R, final gradient 1F+1R, jump 2F). GBM/LMM: 2 replays per indicator (jump only).

\section{Benchmarks}

All benchmarks use QuantLib models via the AADC library (\texttt{pip install aadc}).

\subsection{Down-and-Out Call (GBM)}

$S_0 = 100$, $K = 90$, $B = 80$, $\sigma = 0.25$, $r = 0.05$, $T = 1$, 50 monitoring steps. QuantLib \texttt{BlackScholesProcess}. No Newton needed --- $Z^*$ analytic. 100K paths.

\begin{center}
\begin{tabular}{lccc}
\toprule
Greek & Analytic & Smoothing (100K) & Correction (100K) \\
\midrule
$\Delta$ & $+0.7203$ & $+0.7189\;(-0.2\%)$ & $+0.7198\;(-0.1\%)$ \\
$\mathcal{V}$ & $+7.215$ & $+7.421\;(+2.9\%)$ & $+7.208\;(-0.1\%)$ \\
$\rho$ & $+24.17$ & $+23.89\;(-1.2\%)$ & $+24.21\;(+0.2\%)$ \\
\bottomrule
\end{tabular}
\end{center}

For 2\% accuracy: correction $\sim$8K paths, bump-and-revalue $\sim$12M paths.

\subsection{3-Asset Phoenix Autocallable (Heston)}

3 assets, QuantLib \texttt{HestonProcess}, 8 observation dates, 48 indicators. 10K paths, same seed for both methods.

\begin{center}
\begin{tabular}{lccr}
\toprule
Greek & Bump\&Reval & Correction & Agree \\
\midrule
$\Delta_{S_1}$ & $-0.476$ & $-0.479$ & $0.6\%$ \\
$\Delta_{S_3}$ & $-0.454$ & $-0.457$ & $0.7\%$ \\
$\pp V_0$ (vol-of-vol) & $+1.375$ & $+1.381$ & $0.4\%$ \\
$\pp\xi$ (mean-rev) & $-1.449$ & $-1.456$ & $0.5\%$ \\
\bottomrule
\end{tabular}
\end{center}

For 2\% accuracy: correction $\sim$20K paths, bump-and-revalue $\sim$5M paths.

\subsection{Down-and-Out Call (GBM + Hull-White 1F)}

$S_0 = 100$, $K = 90$, $B = 80$. GBM spot ($\sigma = 0.25$) + HW rate ($r_0 = 0.05$, $a = 0.1$, $\sigma_r = 0.01$, $\rho = -0.3$). 50K paths, Newton needed (stochastic rate).

\begin{center}
\begin{tabular}{lccr}
\toprule
Greek & Pathwise & Correction & Total \\
\midrule
$\Delta_{S_0}$ & $+0.714$ & $+0.125$ & $+0.839$ \\
vega & $+31.6$ & $-12.9$ & $+18.7$ \\
$\pp r_0$ & $+54.0$ & $+5.2$ & $+59.2$ \\
$\pp\sigma_r$ & $-4.7$ & $+0.6$ & $-4.2$ \\
\bottomrule
\end{tabular}
\end{center}

Correction is essential for vega: pathwise alone overestimates by 69\%.

\section{Implementation}

The implementation uses AADC (\texttt{pip install aadc}, free for non-commercial use). Three features are key:

\begin{enumerate}
\item \textbf{Automatic discontinuity identification.} Every conditional branch (\texttt{iif}) in the pricing code is recorded on the tape. Each branch corresponds to one indicator $g_i$.

\item \textbf{Jump evaluation via tape replay.} The jump $\Delta F$ requires re-running the full pricing model at $Z^* \pm \varepsilon v$. AADC replays the tape 10--100$\times$ faster than re-simulation.

\item \textbf{Gradients at the boundary.} Forward replay gives $g(Z^*)$; reverse gives $\nabla_{\!Z} g$ and $\pp g/\pp\theta$ simultaneously.
\end{enumerate}

A Python implementation with tests and examples is available at \url{https://github.com/lakshtanov/aadc-correction}.

\section*{Acknowledgements}

The author thanks Dmitry Goloubentsev (Matlogica) for the original idea (2021--2022) of reducing the dimensionality of the expectation to improve smoothness, and for the AADC library used in the implementation.

\end{document}